\documentclass[copyright,creativecommons]{eptcs} 

\usepackage{amsmath}
\usepackage{amssymb}
\usepackage{amsthm}
\usepackage{amsfonts}
\usepackage{graphicx}

\usepackage{algebra}


\title{Aspects of multiscale modelling in a process algebra for biological systems}

\author{
\begin{tabular}{ cc }
 Roberto Barbuti  & Giulio Caravagna \\
 Paolo Milazzo & Andrea Maggiolo--Schettini \\
\end{tabular}
\institute{Dipartimento di Informatica,\\ Universit\`a di Pisa,\\Largo Pontecorvo 3, 56127 Pisa, Italy.}
\email{\{barbuti, caravagn, maggiolo, milazzo\}@di.unipi.it}
\and
Simone Tini
\institute{Dipartimento di Informatica e Comunicazione, \\ Universit\`a dell'Insubria, \\ 
Via Mazzini 5, 21100 Varese and Via Carloni 78, 22100 Como, Italy.}
\email{simone.tini@uninsubria.it}
}

\date{}
\begin{document}
 \maketitle 

\begin{abstract}
We propose a variant of the CCS process algebra with new features aiming at
allowing multiscale modelling of biological systems. In the usual semantics of
process algebras for modelling biological systems actions are instantaneous.
When different scale levels of biological systems are considered in a single
model, one should take into account that actions at a level may take much more
time than actions at a lower level. Moreover, it might happen that while a
component is involved in one long lasting high level action, it is involved also
in several faster lower level actions. Hence, we propose a process algebra with
operations and with a semantics aimed at dealing with these aspects of
multiscale modelling. We study behavioural equivalences for such an algebra and
give some examples.
\end{abstract}

\section{Introduction}

Formal modelling notations of computer science are nowadays often applied to
the description of biological systems. Such notations can be used to
unambiguously describe the structure and the events governing the dynamics of
the systems of interest, thus allowing development of analysis tools such as
simulators and of formal analysis techniques based, for instance, on model
checking or on behavioural equivalences.

As examples of formalisms that have been applied to the description of
biological systems we mention Bio-PEPA \cite{biopepa1, biopepa2}, the
(stochastic) $\pi$-calculus \cite{priami, spi2, spi3}, Bioambients
\cite{bioambients}, the $\kappa$-calculus \cite{kappa}, the Language for
Biological Systems (LBS) \cite{lbs}, and the Calculus of Looping Sequences (CLS)
\cite{cls1,cls2,cls3}.
In these formalisms the dynamics of a biological system consists of a sequence
of events (usually biochemical reactions) described either as communications
between processes of a process algebra, or as applications of some rewrite
rules. In the stochastic extension of these formalisms, the dynamics of a
system is described by taking also into account the different rates of
occurrence of the events. Rates depend on some parameters associated with the
events (e.g. kinetic constants of the corresponding biochemical reactions) and
on the abundance (or concentration) of the entities (or reactants) that can
cause such events. The rates are then used, as in Gillespie's algorithm
\cite{gillespie}, to describe both the exponentially distributed time elapsing between
two subsequent events, and the probability of an event to occur.

This way of describing the dynamics of biological systems with sequences of
events, however, assumes that the occurrence of one of such events can be described
as an instantaneous change in the system state. In fact, even in the stochastic
approach, the only notion of time that is considered is given by the frequency
of the events, rather than by their duration. The duration of an event is usually 
ignored if it is negligible with respect to the time interval between two events,
or hidden in such a time interval by chosing a rate for the event that is small enough 
to take into account both of its frequency and of its duration.

Phenomena of interest in the study of biological systems often include
processes at different levels of abstraction. A typical example is cell
signalling, that involves gene regulation and protein interaction processes at
the intra-cellular level, a signal diffusion process at the inter-cellular
level, and some macroscopic change at the tissue level. The processes at the
different levels influence each other. However, they involve system
components of very different sizes and are characterized by events having very
different durations. This motivates the {\em multiscale approach to modelling},
whose application to biological systems seems to be promising \cite{multiscale,ABM05,BRS+09,BOA+06,RCS06,SG09}.

The description of the dynamics of biological systems by means of sequences of
instantaneous events is not suitable for multiscale models. When different scale
levels of biological systems are considered in a single model, one should take
into account that events at a level may take much more time than events at a
lower level. Moreover, it might happen that while a component (e.g. a cell) is
involved in a long lasting high level event (e.g. mitosis), it is involved also
in several faster lower level events (e.g. protein sinthesis) that neither
consume such a component nor interrupt the higher level event. Consequently, 
the distinction between the time scales of the events, the possibility of
having the same component involved in several events at different time scales, and the
fact that the completion of some events may (or may not) interfere with other events in which the
same component is involved require new notions of system dynamics to be considered.

In this paper, we propose a process algebra with operations and with a semantics
aimed at dealing with these aspects of multiscale modelling. Actually, we aim
at undertaking a foundational study. Hence, we consider a minimal process algebra
for the description of biological systems (a fragment of CCS \cite{ccs} and of
the Chemical Ground Form \cite{cgf}) and we make the minimal changes we think
to be necessary to describe the new aspects of interest.

As regards the syntax of the new process algebra (called {\em Process Algebra
with Preemptive and Conservative actions}), we propose a new action prefixing
operator that allows an action to be executed in a {\em conservative} (or
non-consuming) manner, namely without removing the process that performed it. As
regards the semantics, we define it as a labelled transition system by following
the {\em ST semantics} approach (see \cite{ST1,ST2,ST3,bravetti}) in which actions are not
instantaneous, but described by two separate starting and ending transitions.
This will permit to have processes in which multiple actions are running in parallel
and competing for their completion. Indeed, we change the usual interpretation of the summation operator (by making it slightly similar to a parallel composition) order to allow a process
to be involved in several actions at the same time. The termination of an action
in a summation may interrupt (in a {\em preemptive} way) the others that are
concurrently executed in the same summation, depending on which action prefixing
operator is used.

The semantics of the process algebra is given in a compositional way, and
this allows us to study behavioural equivalences. In particular, we define
a notion of bisimulation for the process algebra and we prove a congruence
result for it. Some examples are given of use of the process algebra and of the
bisimulation relation.

The paper is structured as follows. In Section \ref{sec:syntax} we introduce the syntax of the
processes, process configurations and some auxiliary functions to define the semantics,
given in Section \ref{sec:semantics} with an example. In Section  \ref{sec:bisimulation} 
a behavioural equivalence for the processes is presented. Finally, we end with some
conclusions and future work in Section  \ref{sec:conclusions}.

\section{A Process Algebra with Preemptive and Conservative actions}\label{sec:syntax}

In this section we present the syntax  of a {\em Process Algebra with Preemptive and Conservative actions}, in the following shortly denoted as PAPC.

\subsection{The Syntax}

PAPC is a process algebra with dyadic communication in the style of CCS \cite{ccs} and of the $\pi$-calculus \cite{pi,pi2}.  Hence, communications involve exactly two processes at a time.  
As in the cases of CCS and of the $\pi$-calculus, the description of biological systems with PAPC is based on a {\em processes-as-molecules} view:
the modeling of a molecular species $S$ occurring $n$ times in a system includes $n$ copies of a process $P_S$ modeling a single molecule.

Differently with respect to the approach of classical process algebras, where actions are instantaneous, in PAPC we assume that actions can consume time and that the instants of start and completion of an action can be detached. An action that has already started but not yet completed is said to be \emph{running}.

Let us assume an infinite set of actions $Act$ ranged over by $\alpha, \beta, \dots$ and a function $(\overline{\hspace{0.1 cm}}): Act \rightarrow Act$ such that $\overline{\overline{\alpha}} = \alpha$. We also denote with $Act_\tau$ the set of actions enriched with the special internal action $\tau$, $Act_\tau=Act \cup \{\tau\}$ .  The abstract syntax of PAPC is a follows.
\begin{definition} Processes of PAPC are defined by the following grammar:
\begin{align*}
P & \;:= \;\;  \nil \agr \alpha.\,P  \agr \alpha:P  \agr P + P \agr P \mid  P \agr A  
\end{align*}
where $\alpha \in \Act$.  We denote the set of all processes by $\cP$.
\end{definition}

As usual, $\nil$ denotes the classical idle process that can perform no action. PAPC processes can perform actions in $\Act$ in two different ways (represented by two different action prefixing operators). Both $\alpha.P$ and $\alpha : P$ can perform the action $\alpha$. However, after performing such an action, the first process simply behaves as $P$, while the second process continues its execution as $P \mid \alpha:P$, namely it produces a copy of $P$, but it is also availabe again to perform $\alpha$. This difference in the behaviour allows $\alpha.P$ to model a molecule that can be involved in a reaction that transforms it into another molecule, and $\alpha:P$ to model a molecule that can be involved in a reaction that does not consume nor transform it.
Another difference between $\alpha.P$ and $\alpha : P$ is related with the fact that actions are not instantaneous and that a process can start several actions concurrently. When an action used as in $\alpha.P$ completes, it interrupts all other running actions for the same process. This agrees with the intuition that $\alpha.P$ represents a transformation of the described molecule into a different one. On the other hand, when an action used as in $\alpha:P$ completes, it does not interrupt the other running actions of the same process. Again, this agrees with the intuition that the described molecule is neither consumed nor transformed into something different. We say that $\alpha$ is {\em preemptive} with respect to $\alpha.P$ and that it is {\em conservative} with respect to $\alpha:P$. Notice that in PAPC the composition of these two approaches is possible, since an action $\alpha$ may be conservative for a process, and the complementary action $\coalpha$ may be preemptive. 

A process $P + Q$ is able to start actions of both $P$ or $Q$, but notice that, because of the difference between conservative and preemptive actions and the fact that actions are not instantaneous, this summation operator does not correspond to the choice operator of classical process algebras. In fact, starting an action in $P$ (resp. $Q$) does not imply that process $Q$ (resp. $P$) is discarded. As a consequence, $P + Q$ may start several actions, which are said to be in \emph{competition}, meaning that they run concurrently until one of them completes. When an action $\alpha$ in $P$ (resp. $Q$) completes, then all actions 
in $Q$ (resp. $P$) competing with it have to be interrupted only if $\alpha$ is preemptive. If this is the case, then $Q$ (resp. $P$) is discarded. Differently, if a conservative action completes, then the summation is not discharged and, consequently, all the running actions in the summation are still running.

A process $P \mid Q$ is the parallel composition of  processes $P$ and $Q$. Handshaking is possible between any action $\alpha$ in $P$ and its complementary action $\coalpha$ in $Q$. The handshaking is to be performed to assign to an instance of communication a unique identifier which may be used to
interrupt the communication. An interesting case here is when an action $\alpha$ in $P$ and its complementary action $\coalpha$ in $Q$ have been coupled with an handshaking, have started and not yet terminated, and some preemptive action in $P$ competing with $\alpha$ completes, so that $\alpha$ must be interrupted.  In fact, in this case also the complementary action $\coalpha$ in $Q$ must be interrupted.

Finally, constants $A$ are used to specify recursive systems. In general, systems are specified as a set of constant defining equations of the form $A \defP P$. As usual, we assume that all processes in these equations are \emph{closed} and \emph{guarded}.

Some further considerations are worth in order to introduce the differences between PAPC and the classical process algebras. The capability of having competing actions is at the basis for the choices we
made in the definition of PAPC. A process $P \defP P_1 + ... + P_n$ can start {\em multiple actions in
parallel}, but can be involved in {\em each action at most once at a time}. Notice that, in classical process algebras, this is not possible since an action, when starts, determines the future process to transform $P$ in. Hence, the choice is resolved at the time of the starting of an action. In this sense, the summation operator of PAPC is not a classical choice for the reason that the competing actions compete for their completion, and, then, the semantics of the completion, and hence the semantics of the PAPC summation, will depend on the type of the action to be completed, namely whether it is conservative or preemptive. 

Consequently, at any time of a computation, $P$ could be in a configuration in which some of its actions are currently running. More precisely, the competition of the running actions is due to the fact that
they are waiting to complete. Practically, the time for completion may be modeled by general distributions
as in \cite{bravetti}, or by delays as in \cite{io,io2}. However, in this first definition of the algebra we do not consider quantitative timing and stochasticity.

In order to define the semantics of PAPC we need to model a process with possibly running actions. We do this by introducing a notion of process configuration.
\begin{definition} Process configurations of PAPC are 
defined by the following grammar:
\begin{align*}
C_P &\;:=\;  \freeze{\alpha}{l}.P  \agr \freeze{\alpha}{l}:P  \agr C_P + C_P \agr C_P \mid  C_P \agr P  
\end{align*}
where $\alpha \in \Act$ and $l \in \mathbb{N}$. We denote the set of all possible process configurations as $\cC$.
\end{definition}

Any process $P\in \cP$ is also in a valid configuration, hence  $\cP \subset \cC$. However, a process configuration may contain actions denoted by a different prefix. In particular, the configuration 
$\freeze{\alpha}{l}.P$ is the configuration reached by $\alpha.\,P$ after $\alpha$ has started, and $\freeze{\alpha}{l}:P$ is the configuration reached by $\alpha :P$ after $\alpha$ has started.
For both the action prefixes, the new argument $l\in \mathbb{N}$ is a natural number that identifies the running action. Notice that these identifiers, which have to be unique, are computed by the handshaking performed before the start of an action and, once a preemptive action is completed, they may be used to interrupt all other competing actions. Then, if one of these competing actions is $\alpha$ and it has started an handshaking with another action $\coalpha$ in a process running in parallel, then also this action $\coalpha$ will be interrupted. This can be obtained by assigning the same identifier to these two actions when the handshaking begins. By the definition of the semantics it will be clear how both the partners will share the same identifier for the actions.

Let us define by structural recursion an auxiliary function $Id: \cC\mapsto \mathbb{N}$ as follows:
\begin{align*}
Id( \freeze{\alpha}{l}.P ) =  Id( \freeze{\alpha}{l}:P ) &= \{l\} \\
Id(C_P + C'_P) = Id(C_P \mid C'_P) &= Id(C_P) \cup Id(C'_P) \\
Id(P) &= \emptyset \,.
\end{align*}
The value $Id(C_P)$ denotes the set of the identifiers of the actions in the configuration $C_P$ that are running. For instance, given a configuration $C_P \defP \freeze{\alpha}{l}.P + \beta:Q | \freeze{\gamma}{l'}.T$, the identifiers collected by function $Id$ are given by  $Id(C_P) = \{l, l'\}$.

Finally, we define a function $Action: \mathbb{N} \times \cC \mapsto \wp(\Act)$ such that $Action(l, C_P)$ collects the set of actions currently running in $C_P$ and with assigned identifier $l$, if any. The function $Action$ is defined as follows:
\begin{align*}
Action( l, \freeze{\alpha}{l}.P ) =  Action( l, \freeze{\alpha}{l}:P ) &= \{\alpha\}  \\
Action( l, \freeze{\alpha}{l'}.P ) =  Action( l, \freeze{\alpha}{l'}:P ) &= \emptyset \quad \text{ if } l \neq l' \\
Action(l, C_P + C'_P) = Action(l, C_P \mid C'_P) &= Action(l,C_P) \cup 
Action(l,C'_P) \\
Action(l, P) &= \emptyset \,.
\end{align*}
The definition of this function is similar to the definition of $Id$.  For  instance, given the configuration
$C_P \defP \freeze{\alpha}{l}.P + \beta:Q | \freeze{\gamma}{l'}.T$, the actions collected by function $Action$ are given by  $Action(l, C_P) = \{\alpha\}$, $Action(l', C_P) = \{\gamma\}$ and, for all $l'' \neq l$ and $l'' \neq l'$,  $Action(l'', C_P) = \emptyset$.

In the next sections we define the semantics of PAPC by using the notions of process, process configurations and these auxiliary functions.

\section{A Structural Operational Semantics for PAPC} \label{sec:semantics}

In this section we define a Structural Operational Semantics (SOS) \cite{plotkin} for PAPC.
The aim of the SOS is to equip PAPC with a Labeled Transition System (LTS), namely a set of transitions of the form $P \stackrel{\ell}{\rightarrow}_r P'$ representing a move from $P \in \cC$ to $P' \in \cC$, with the \emph{label} $\ell$ carrying some information about the move and the index $r$ used to group transitions describing a particular aspect of the behavior of the processes. The LTS is defined by a set of SOS transition rules of the form $\frac{\mbox{\scriptsize{premises}}}{\mbox{\scriptsize{conclusion}}}$. 
Intuitively, each of these rules explains how a move of a process is obtained from moves of its subprocesses. All our rules are in Figures \ref{fig:handshaking}--\ref{fig:rec}. We assume the standard way for assigning an LTS with such a set of transition rules (see, e.g., \cite{aceto}).

The main features of the SOS  we want are the following. Firstly, it must have a mechanism to interrupt competing actions and this mechanism is activated by the completion of a preemptive action. Secondly,  the  style of the semantics must be ST-like, as this permits to easily observe detached events as the start and the completion of an action.

In order to get this features, we define a relation for modeling the start of an action and the coupling of processes; this will be named as the {\em handshaking relation}. Furthermore, we define a {\em completion relation} for modeling the finishing of both preemptive and conservative actions. These two relations will make use of an {\em interruption relation} to model the interruption of currently running actions, as required by the notion of preemptive actions.

\subsection*{The handshaking relation}

This relation is used to model the starting of an action and the coupling of the processes starting complementary actions. The handshaking relation is $\ltrans{}_H  \subseteq  \cC \times \Theta^+ \times \cC$, where $\Theta^+$ contains labels $\theta^+$ of the form
\[
\theta^+ = ( l,  \alpha^+) 
\]
where $l \in \mathbb{N}$ represents the identifier assigned to the started action $\alpha \in Act_\tau$,
and the use of the superscript ``+''  comes from the definition of the semantics in the ST style, in order to denote the start of an action. The SOS rules  in Figure \ref{fig:handshaking} are at the basis of the definition
of $\ltrans{}_H$. We implicitly assume the rules symmetric to $(H_3), (H_4), (H_5), (H_6)$.

\begin{figure}[t]
\centering
\begin{gather*} 
(H_1)\;\; \alpha.P \hand{1}{\alpha} \freeze{\alpha}{1}.P \quad \quad
(H_2)\;\; \alpha:P \hand{1}{\alpha} \freeze{\alpha} {1}:P \\
(H_3)\;\;\frac{P \hand{l}{\alpha} P' \qquad l \not \in Id(Q)}
	{P+Q  \hand{l}{\alpha} P' + Q } \\
(H_4)\;\; \frac{P \hand{l}{\alpha} P' \qquad l \in Id(Q) \quad l'=min\{\mathbb{N} - Id(P+Q) \}}
	{P+Q  \hand{l'}{\alpha} P'[l'/l] + Q }\\
(H_5)\;\; \frac{P  \hand{l}{\alpha} P' \qquad l \not \in Id(Q) \quad \alpha \in Act_\tau}
	{P \mid Q  \hand{l}{\alpha} P' \mid Q }\\
(H_6)\;\; \frac{P  \hand{l}{\alpha}P' \qquad l \in Id(Q) \quad l'=min\{\mathbb{N} - Id(P \mid Q) \} \quad \alpha \in Act_\tau}
	{P \mid Q \hand{l'}{\alpha} P'[l'/l] \mid Q }\\
(H_7)\;\; \frac{P  \hand{l}{\alpha}P' \qquad Q \hand{l'}{\coalpha} Q'  \quad l''=min\{\mathbb{N} - Id(P \mid Q) \}}
	{P\mid Q  \hand{l''}{\tau} P'[l''/l] \mid Q'[l''/l'] }
\end{gather*}
\caption{The handshaking relation $\ltrans{}_H  \subseteq  \cC \times \Theta^+ \times \cC$.}
\label{fig:handshaking}
\end{figure}

Rules $(H_1)$ and $(H_2)$ model the starting of an action $\alpha$. At any time a process with prefix $\alpha$ can start  action $\alpha$ moving to a configuration in which it cannot perform the same action anymore, i.e. the configuration $\freeze{\alpha}{1}.P$ or, analogously, the configuration $\freeze{\alpha}{1}:P$. Such a configuration, together with the one describing the process performing the complementary action, has to be uniquely identified by a natural number representing the identifier of the just started action. 
At this step, the process simply chooses $1$ as unique identifier. All our choices for assigning identifiers to actions are inspired to those of \cite{BG99}, which ensure that the portion of LTS rooted in a given
process is finite. The rules for binary operators $+$ and $\mid$ will solve conflicts of
colliding identifiers, if any. Notice that both preemptive actions and conservative actions start in the 
same way.

Rules $(H_3)$ and $(H_4)$ combine the start of an action with operator $+$. In rule $(H_3)$ the identifier $l$ of the action $\alpha$ started by process $P$ has no conflicts with the identifiers of the competing actions running in process $Q$. Differently, in the case of rule $(H_4)$ a conflict does exist, which implies that a fresh identifier $l'$ replaces $l$. Again, the policy by which we choose the new fresh identifier, along the line of \cite{BG99}, is such that the resulting LTS is finite. More precisely, the use of the set $\mathbb{N} - Id(P+Q)$ is such that we consider, in the process of renaming an identifier in $P'$, the only set of identifiers not used in a process $P+Q$ and, from that, the choice of extracting the minimum value is such that the LTS is finite. We use this strategy in all the rules where we have to resolve some conflicts.

Rules $(H_5)$, $(H_6)$ and $(H_7$) combine the start of an action with the operator $\mid$. Rules $(H_5)$ and $(H_6)$ model an autonomous move by one of the two processes, and deal with identifiers as $(H_3)$ and $(H_4)$, respectively. As in classical process algebras, we do not force $P$ and $Q$ to handshake, since $P$ could handshake with a further process composed in parallel with $P \mid Q$.

Notice that here we may have a conflict even if in $Q$ the action associated with the colliding identifier is the complementary action $\coalpha$.  Rule $(H_7)$ models the handshaking by assigning to this particular instance of synchronization a new fresh identifier $l''$ chosen with the same policy used to resolve conflicts in the previous rules. The renaming of both old identifiers with the newly generated is due 
to the fact that, in general, the two processes will have two different candidate identifiers, i.e. $l$ and $l'$. 
The system in this case exhibits the internal action $\tau^+$. By applying this rule, the two processes terminated this {\em handshaking phase}.

\subsection*{The interruption relation}

This relation, differently from those found in classical process algebras, is used to model the interruption of a set of actions currently running in a process. The interruption is caused by the completion of competing preemptive actions. The interruption relation is $\ltrans{}_I  \subseteq  \cC \times \wp(\mathbb{N}) \times \cC$, where a label $M \in  \wp(\mathbb{N})$ contains the identifiers of the actions that have been interrupted.  
The rules presented in Figure \ref{fig:interruption} are at the basis of the definition of  $\ltrans{}_I$. 

\begin{figure}[t]
\centering
\begin{gather*} 
(I_1)\;\; {\freeze{\alpha}{l}.P \interr{ \{l\}}  \alpha.P } \quad\quad
(I_2)\;\; {\freeze{\alpha}{l}:P \interr{ \{l\}}  \alpha:P } \\
(I_3)\;\;{\freeze{\alpha}{l}.P \interr{ \emptyset } \freeze{\alpha}{l}.P }\quad\quad
(I_4)\;\;{\freeze{\alpha}{l}:P \interr{ \emptyset } \freeze{\alpha}{l}:P }\\
(I_5)\;\;{\alpha.P \interr{ \emptyset }  \alpha.P } \quad\quad
(I_6)\;\;{\alpha:P \interr{  \emptyset }  \alpha:P } \\
(I_7)\;\;\frac{P \interr{L} P' \qquad Q \interr{M} Q' }
{P + Q \interr{ L \cup M} P' + Q' }
\end{gather*}
\caption{The interruption relation $\interr{}  \subseteq  \cC \times \wp(\mathbb{N}) \times \cC$.
}
\label{fig:interruption}
\end{figure}

At any time, a process either in configuration $\freeze{\alpha}{l}.P$ or $\freeze{\alpha}{l}:P$ may interrupt the action it is currently performing. In these cases, treated with rules $(I_1)$ and $(I_2)$, it moves to a
configuration in which the interrupted action $\alpha$ may start again, namely to configuration $\alpha.P$ or $\alpha:P$, respectively. In both the rules, the identifier $l$ of the interrupted action is exhibited
as a label of this transition. Again, this information will be used to interrupt also the partner of this 
action, as we are assuming that there is a partner in the system which, 
after terminating the handshaking phase, has been coupled with the same 
label $l$.

In some cases not all the actions have to be interrupted, so the processes in configuration $\freeze{\alpha}{l}.P$ or $\freeze{\alpha}{l}:P$ must be able also to non-deterministically decide 
whether to interrupt or not. This second case is described by rules $(I_3)$ and $(I_4)$, 
which may seem controversial at first glance. In particular, it may not be clear why a process may independently decide whether to interrupt or not some of the currently running actions. 
The need of this autonomy for the process can be clarified by an example. Let us assume a process configuration  $(P + \Sigma) \mid (Q + \Sigma') \mid S \mid R$, where both $P$ and $Q$ successfully 
complete a preemptive action. The actions to be interrupted are those currently running in both $\Sigma$
and $\Sigma'$, namely those with identifiers denoted by $Id(\Sigma) \cup Id(\Sigma')$. 
Let us assume that some of the actions that have to be interrupted in $\Sigma$ and $\Sigma'$ were coupled with some actions in $S$. In this case, also these actions in $S$ should be interrupted as well.  
Moreover, $S$ may be involved in other actions currently running and coupled with actions in $R$. Indeed, these actions must not be interrupted. This means that from $S$ the correct derivation with the interruption
relation, in general, will not exhibit as label $Id(S)$, indeed it will exhibit a strict subset of $Id(S)$. 
This implies that $S$ must be able to autonomously decide which actions to interrupt, and this can be done by properly combining derivations of the interruption relation. The composition of the relations of the whole semantics will provide the correctness, namely the fact that all and only those to interrupt are 
actually interrupted.

Also, a process which is not performing any action, namely a process in a configuration $\alpha.P$ or $\alpha:P$, does not interrupt any action, as stated by rules $(I_5)$ and $(I_6)$.

Finally, rule $(I_7)$ simply collects the labels of the interrupted actions in a summation. Notice that this relation is not defined for process configurations of the form $P_C \mid P_C'$ as the use of this relation 
is limited to the level of the summation.

\subsection*{The completion relation for preemptive actions}

This relation is used to model the completion of a preemptive action. We will define completion relations also for conservative actions as well as the combination of both preemptive and conservative actions.

The completion relation for preemptive actions is $\ltrans{}_{CP}  \subseteq  \cC \times \Theta_{CP}^-  \times \cC$, with $\Theta^-_{CP}$ containing labels of the form
\[
\theta^- := ( l,  \alpha^-, N) 
\]
where $l \in \mathbb{N}$ represents the identifier that was assigned to the completed action $\alpha \in Act_\tau$ when it was started, $N \in \wp(\mathbb{N})$ is the set of the identifiers of the competing 
actions that are interrupted by the termination of $\alpha$, and the use of the superscript ``-''  comes from the definition of the semantics in the ST style.  
The rules presented in Figure \ref{fig:completionp} are at the basis of the definition of
$\ltrans{}_{CP}$. We implicitly assume rules symmetric to $(C_2)$ and $(C_3)$.

\begin{figure}[t]
\centering
\begin{gather*} 
(C_1)\;\; {\freeze{\alpha}{l}.P \complp{l}{\alpha}{\emptyset} P } \quad\quad
(C_2)\;\; \frac{P \complp{l}{\alpha}{L} P'  }
{P + Q \ \complp{l}{\alpha}{L \cup Id(Q)} P'} \\
(C_3)\;\; \frac{
P \complp{l}{\alpha}{L} P' \quad 
Q \interr{M} Q' \quad  M \supseteq (Id(Q) \cap L) 
\quad V = (L \cup M) \setminus (L \cap Id(Q))  \quad \alpha \in Act _\tau
}
{P \mid Q \complp{l}{\alpha}{V} P' \mid Q'}\\
(C_4)\;\; 
\frac{
P \complp{l}{\alpha}{L} P' \quad Q \complp{l}{\coalpha}{M} Q' \quad
N = L \cap M \quad
}
{P \mid Q \complp{l}{\tau}{(L \cup M) \backslash N} P' \mid Q'}
\end{gather*}

\caption{The completion relation for preemptive actions $\ltrans{}_{CP}  \subseteq  \cC \times  \Theta_{CP}^-  \times \cC$.}
\label{fig:completionp}
\end{figure}

Rule $(C_1)$ describes the completion of a preemptive action. When it completes, as the action is preemptive, the process is substituted by its continuation $P$. In the label, the 
identifier $l$ is needed to couple this process with the one performing the corresponding complementary action $\overline{\alpha}$, which will have the same identifier $l$ because of the handshaking, and $\emptyset$ states that no action is interrupted.

Rule $(C_2)$ states that the completion of a preemptive action in $P$  affects a summation $P+Q$ so that all actions running in $Q$ should be interrupted. This is obtained by adding to the set of labels of actions interrupted $L$, the set of actions currently running in the process $Q$ which disappears by the completion of the action in $P$, hence the exhibited set of labels becomes $L \cup Id(Q)$. 

Rule $(C_3)$ states that the completion of a preemptive action in $P$ affects a parallel composition $P \mid Q$ so that all actions running in $Q$ that are coupled with actions interrupted in $P$, must be interrupted as well.

Rule $(C_4)$ models the case in which both $P$ and $Q$ complete preemptive actions that were coupled.
As in classical process algebras, the whole system $P \mid Q$ exhibits an internal  action $\tau$. 
Some of the actions required to be interrupted outside $P$, may be also required to be interrupted by $Q$. 
Such a set is denoted by $N$ and can be removed from the set of actions that can be interrupted outside $P \mid Q$. The remaining set of actions, which have to be still interrupted by further composition with the parallel operator outside $P\mid Q$, is the set of those belonging to $P$ and not to $Q$, and viceversa.

\subsection*{The completion relation for conservative actions}

This relation is used to model the completion of a conservative action. 
This  relation is $\ltrans{}_{CC}  \subseteq  \cC \times \Theta_{CC}^-  \times \cC$, where $\Theta_{CC}^-$ contains labels of the form
\[
\theta^- := ( l,  \alpha^-, N ,P) 
\]
where $l \in \mathbb{N}$ represents the identifier assigned to the completed action $\alpha \in Act_\tau$, $N \in \wp(\mathbb{N})$ is the set of identifiers of the interrupted actions, and $P \in \cP$ is the continuation of the action which terminated and that, syntactically, must be propagated at the level of a parallel composition. At first sight it could sound strange that we need the component $N$. The idea is that we have to take care that a process $Q$ terminating a conservative action $\alpha$ could be composed in parallel with another  process $Q'$ terminating the action $\coalpha$ coupled with $\alpha$. Now, if $\coalpha$ is preemptive, there may be some running actions $\beta$ in $Q'$ that should be interrupted, which implies that if there is an action $\overline{\beta}$ in $Q$ coupled with $\beta$, also $\overline{\beta}$ must be interrupted. For this reason, such a $\overline{\beta}$ must appear in $N$. Of course, if also $\coalpha$ is conservative, then in the transition by $Q$ used to infer the transition of $Q \mid Q'$,  $N$ will be empty.

The rules presented in Figure \ref{fig:completionc} are at the basis of the definition of
relation $\ltrans{}_{CP}$. We implicitly assume rules symmetric to $(C_6)$ and $(C_7)$.

\begin{figure}[t]
\centering
\begin{gather*} 
(C_5)\;\; {\freeze{\alpha}{l}:P \complc{l}{\alpha}{\emptyset}{P} \alpha:P } \quad \quad
(C_6)\;\;\frac{P \complc{l}{\alpha}{L}{P''} P'  \quad Q \interr{M} Q'}
	{P+Q \complc{l}{\alpha}{L \cup M}{P''} P' + Q' }\\
(C_7)\;\; \frac{
P \complc{l}{\alpha}{L}{P''} P' }
{P \mid Q \complc{l}{\alpha}{L}{P''} P' \mid Q}
\quad
(C_8)\;\;\frac{P \complc{l}{\alpha}{\emptyset}{P''} P' \qquad Q \complc{l}{\coalpha}{\emptyset}{Q''} Q' }
	{P\mid Q \complp{l}{\tau}{\emptyset} P' \mid Q' \mid P'' \mid Q'' }\quad
 \end{gather*}

\caption{The completion relation for conservative actions $\ltrans{}_{CC}  \subseteq  \cC \times  \Theta_{CC}^-  \times \cC$.}
\label{fig:completionc}
\end{figure}

Rule $(C_5)$ deals with termination of a conservative action $\alpha$ in configuration $\freeze{\alpha}{l}:P$. The process becomes able to perform the terminated action again, namely it rolls back to configuration ${\alpha}:P$. Also, as expected by a conservative action, it produces the continuation $P$, which, because of the inductive approach of the SOS semantics, cannot appear at the same syntactic level of the configuration ${\alpha}:P$. Specifically, the continuation $P$ will have to appear at the level of a 
parallel composition. To forward $P$ at the correct syntactic level, $P$ is exhibited
as a label of the transition. The empty set used in the label denotes that no action is interrupted by
completion of $\alpha$.

Rule $(C_6)$ clearly justifies the terminology ``conservative''.  When a conservative action $\alpha$ completes in a process $P$ being part of a configuration $P+Q$, it is required neither that actions in $Q$ are interrupted, nor that $Q$ is canceled. This is clearly different from what happens when a preemptive action is completed (rule $(C_2)$). More precisely, the set of actions interrupted in $Q$, namely $M$, is a subset of $Id(Q)$ since here not all the actions in $Q$ have to be interrupted. Furthermore, the continuation of the action, namely $P''$ in the rule, is exhibited as a transition label, since also in this case we are not yet at the syntactic level of a parallel composition. 

Rule $(C_7)$ describes the case in which $P$ completes a conservative action $\alpha$ and the coupled action $\coalpha$ is not in $Q$, since it runs in some other process running in parallel with $P \mid Q$.

Rule $(C_8)$ deals with the completion of two coupled conservative actions. As expected,  the system exhibits an internal action $\tau$, the label shows that no action has to be interrupted, and 
both the continuations appearing in the labels of the transitions of both processes $P$ and $Q$, namely $P''$ and $Q''$, are put in parallel with the continuations $P'$ and $Q'$.  Notice that this last derivation is a derivation for $\rightarrow_{CP}$ rather than $\rightarrow_{CC}$. The reason for this is that $P''$ and $Q''$ are already at the correct syntactic level and do not require to be lifted anymore.

\subsection*{Completion of both conservative and preemptive actions}

We have to deal with the completion of two coupled actions $\alpha$ and $\coalpha$ such that one of them is conservative and the other preemptive. To this purpose, we add the rule shown in Figure \ref{fig:completionh} and we implicitly assume a symmetric rule.

\begin{figure}[t]
\centering
\begin{gather*} 
(C_9) \;\; \frac{P \complc{l}{\alpha}{L}{P''} P' \qquad Q \complp{l}{\coalpha}{M} Q'  \quad L \subseteq M}
	{P\mid Q \complp{l}{\tau}{M \backslash L} P' \mid Q' \mid P''} 
\end{gather*}
\caption{The completion relation for hybrid actions obtained by means of the other completion relations.}
\label{fig:completionh}
\end{figure}

Notice that $L \subseteq M$ expresses that the actions running in the process $Q$ performing
the conservative actions and that have to be interrupted are those that were coupled with actions running in the process $P$ performing the preemptive action. In fact, such a coupling is the only reason we have to interrupt actions running in $Q$.

\subsection*{Recursive definitions}

As far as the naming of processes is concerned, we define the standard rule for recursion, showed in Figure \ref{fig:rec}, for all the relations we defined. 
\begin{figure}[t]
\centering
\begin{gather*} 
(R_1) \;\; \frac{P \stackrel{\ell}{\rightarrow_r} P'}
	{A  \stackrel{\ell}{\rightarrow_r} P'} \qquad \mbox{if } A \defP P
\end{gather*}
\caption{The standard rule for recursion.}
\label{fig:rec}
\end{figure}

\subsection*{A toy example}

In this section we discuss the modeling of a multiscale system where we consider two populations. At a higher level of abstraction we consider a cell $C$, and at a lower level a generic protein $P$. A cell $C$  can be involved in a process leading to its duplication. Also, it can be involved in some low--level reactions (i.e. DNA transcription inside its nucleus) leading to the creation, in the environment outside $C$, of a protein of species $P$.  Of course, we consider this model at a level of detail such that we do not need to take into account any other possible population of either cells or proteins which could be involved in the dynamics.

In the context of chemically reacting systems such a system may be described by two populations $C$ and $P$, and by two reactions $R_1$ and $R_2$ such that
\begin{align*}
R_1: \; C & \ltrans{} C + C  & R_2:\; C \ltrans{} C + P.
\end{align*}
Reactions $R_1$ and $R_2$ model the non linear growth of cell $C$ and the production of a protein $P$ by a cell $C$, respectively. Notice that, at this level of detail,  the production of protein $P$ depends on the cell $C$ where all the details of the biological process leading to the creation of the protein are abstracted away. The initial state of the system can be defined to have a precise initial number of cells $C$ and proteins $P$ in the environment. 

We model now such a system in PAPC, and we show how the semantics models the behavior of the populations. As PAPC is based on the paradigm processes-as-molecules, we start by assuming two types of processes for each species which, for clarity, are named $C$ and $P$. As we want to model two reactions, we assume the following set of actions $\{ \alpha, \coalpha, \gamma ,\cogamma \}$ where $\alpha$ (resp. $\coalpha$)  and $\gamma$ (resp. $\cogamma$) model reaction $R_1$ and $R_2$, respectively. Also, as in PAPC the communication is dyadic and the reactions use a single reactant, we define two auxiliary process $\Aux$ and $\Bux$, used to model the communications on $\alpha$ and $\gamma$, respectively.

Reaction $R_1$ creates two new different cells able to start again, if possible, the duplication process. In the context of PAPC we model $R_1$ by using preemptive actions for both $\alpha$ and $\coalpha$ since the duplication of a cell interrupts all the low-level protein-transcription event inside the duplicated cell. Differently, the actions modeling reaction $R_2$ are conservative since the protein-transcription event does not interrupt the duplication process started by a cell.

The PAPC processes are defined as 
\begin{align*}
C & \defP \alpha.(C \mid C) + \gamma : P \qquad \quad
\Aux \defP \coalpha.(\Aux \mid \Aux) \qquad \quad
\Bux  \defP \cogamma : \nil \,.
\end{align*}
Notice that we do not give a definition of the process $P$ since it appears only as a product of the events we want to model, and hence we are not interested in the interactions it may have in the system.

Process $C$ as expected can perform two actions. Action $\alpha$ produces the two new copies of $C$. Such an action is performed by synchronizing with the auxiliary process $\Aux$ which will produce two copies of itself to permit the duplication of the new cell. Indeed, the number of copies of the auxiliary process $\Aux$ must grow with the same law of growth for the cells $C$. Both $C$ and $\Aux$ behave as preemptive in $\alpha$ and $\coalpha$, as expected. Process $C$ can also perform, by synchronizing with $\Bux$, the action $\gamma$. The result of such action is to produce a new protein $B$ without interrupting its running duplication, if any. The number of processes $\Bux$ in the system bounds the number of cells which can simultaneously produce a protein of type $B$. This is obtained by producing, in $\Bux$, the nil process. If this were a too strong constraint, it would have been possible to add a $\Bux$ process in the continuation of the action $\alpha$ in $C$, as done for $\Aux$. For the sake of simplicity in this model we consider this constraint to be reasonable.

We discuss now some features of the semantics of PAPC for a simple system $S$ described by the process
\[
S \defP C \mid \Aux \mid \Bux\, .
\]
In $S$ both the reactions may fire. We assume reaction $R_1$ to fire first. To this extent, the semantics permits to observe handshaking derivations as 
\[
C \hand{1}{\alpha} \freeze{\alpha}{1}. ( C\mid C) + \gamma:P \qquad \quad \Aux \hand{1}{\alpha} \freeze{\coalpha}{1}.(\Aux \mid \Aux) \,.
\]
Then the whole process $S$ performs a derivation as
\[
S \hand{1}{\tau} \freeze{\alpha}{1}. ( C\mid C) + \gamma:P \mid \freeze{\coalpha}{1}.(\Aux \mid \Aux) \mid \Bux \equiv S'
\]
where the new process $S'$ is such that the action $\alpha$ is now running in $C$ and in $\Aux$, with identifier $1$. In $S'$ action $\alpha$ can not start, but just complete, we assume reaction $R_2$ to fire. The semantics permits to observe the handshaking derivations 
\[
\gamma:P \hand{1}{\gamma} \freeze{\gamma}{1}:P \qquad \quad \Bux \hand{1}{\cogamma} \freeze{\cogamma}{1}:\nil \,.
\]
The composition of these derivations resolves the conflicts of the colliding identifiers such that the derivation for $C$ will be 
\[
\freeze{\alpha}{1}. ( C\mid C) + \gamma:P \hand{2}{\gamma}  \freeze{\alpha}{1}. ( C\mid C) + \freeze{\gamma}{2}:P
\] 
and the whole system performs the following derivation
\[
S' \hand{2}{\tau} \freeze{\alpha}{1}. ( C\mid C) +  \freeze{\gamma}{2}:P \mid \freeze{\coalpha}{1}.(\Aux \mid \Aux) \mid  \freeze{\cogamma}{2}:\nil \equiv  S''
\]
where in $S''$ all the  possible actions are running. We consider now two different cases:  $(a)$ $R_1$ completes before $R_2$ and $(b)$ viceversa.
\begin{itemize}
\item[$(a)$] Reaction $R_1$ completes before $R_2$: in this case action $\alpha$ (resp. $\coalpha$) completes before action $\gamma$ (resp. $\cogamma$), interrupting it. The semantics permits to derive transitions as 
\[
\freeze{\alpha}{1}. ( C\mid C) \complp{1}{\alpha}{\emptyset} C \mid C \qquad \quad \freeze{\coalpha}{1}.(\Aux \mid \Aux) \complp{1}{\coalpha}{\emptyset}\Aux \mid \Aux \,.
\]
Actions in $C$ and in $\Bux$ have to be interrupted, hence we derive 
\[
\freeze{\alpha}{1}. ( C\mid C) +  \freeze{\gamma}{2}:P \complp{1}{\alpha}{\{2\}} C \mid C \qquad \quad
\freeze{\cogamma}{2}:\nil \interr{\{2\}} \Bux
\]
where $\{2\}$ denotes the actions to be interrupted. Consequently, the whole process $S''$ will perform the transition
\[
S'' \complp{1}{\tau}{\emptyset} C \mid C \mid \Aux \mid \Aux \mid \Bux
\]
where in the resulting  process no actions are running, as expected, and there are two cells and two auxiliary processes $\Aux$.

\item[$(b)$] Reaction $R_2$ completes before $R_1$: in this case action $\gamma$ (resp. $\cogamma$) completes before action $\alpha$ (resp. $\coalpha$). The semantics permits to derive transitions as
\[
\freeze{\cogamma}{2}:\nil \complc{2}{\cogamma}{\emptyset}{\nil} \cogamma:\nil \qquad \quad 
\freeze{\gamma}{2}:P \complc{2}{\gamma}{\emptyset}{P} \gamma:P\, .
\]
No actions have to be interrupted in any process, hence we derive 
\[
\freeze{\alpha}{1}. ( C\mid C) \interr{\emptyset} \freeze{\alpha}{1}. ( C\mid C) \qquad \quad
\freeze{\alpha}{1}. ( C\mid C) + \freeze{\gamma}{2}:P \complc{2}{\gamma}{\emptyset}{\nil} \freeze{\alpha}{1}. ( C\mid C) + \gamma:P \,. 
\] 
The whole process $S''$ will then perform the transition
\[
S'' \complp{2}{\tau}{\emptyset}  \freeze{\alpha}{1}. ( C\mid C) + \gamma:P \mid \freeze{\coalpha}{1}.(\Aux \mid \Aux) \mid \Bux \mid P \mid \nil 
\]
where, as expected, in the resulting process only one action is still running (cell division) and a single protein $P$ has been produced.
\end{itemize}

\section{Bisimulation equivalence for PAPC} \label{sec:bisimulation}

Bisimulation equivalence is a central notion in concurrency theory. For processes with a higher order behaviour, namely processes whose behaviour is described by portions of transition systems in which processes appear in the labels, the notion of bisimulation is usually replaced by a higher order notion of bisimulation \cite{hob1,hob2,hob3}, which can be rephrased in our setting as follows:
\begin{definition}
A symmetric relation ${\mathcal R} \subseteq \cC \times \cC$ is a \emph{bisimulation} iff whenever $(P,Q) \in {\mathcal R}$, then it holds that:
\begin{itemize} 
\item if $P \stackrel{\ell}{\rightarrow_{r}} P'$ for any $P' \in \cC$, $r \in \{H,I,CP\}$ and label $\ell$, then
$Q \stackrel{\ell}{\rightarrow_{r}} Q'$ for some $Q' \in \cC$ such that $(P',Q') \in {\mathcal R}$.
\item  if $P \complc{l}{\alpha}{L}{P''} P'$ for any $P' \in \cC$, $l \in \mathbb{N}$, $\alpha \in Act$, $L \subseteq \mathbb{N}$ and  $P'' \in \cC$, then $Q \complc{l}{\alpha}{L}{Q''} Q'$
for some $Q',Q'' \in \cC$ such that $(P',Q') \in {\mathcal R}$ and $(P'',Q'') \in {\mathcal R}$.
\end{itemize}
\end{definition}
The union of all bisimulations is, in turn, a bisimulation, which is denoted with $\approx$ and is called as ``the bisimulation''.

For an algebric treatment of bisimulation equivalence and to reason in a compositional way, a bisimulation is required to be a congruence. By taking the standard notion of context $C[\;]$, a bisimulation ${\mathcal R}$ is a congruence with respect to all operations of the process algebra if and only if, given any pair $(P,Q) \in {\mathcal R}$ and any context $C[\;]$, it holds that $(C[P],C[R]) \in {\mathcal R}$.

\begin{theorem}
Bisimulation is a congruence w.r.t. all PAPC operations.
\end{theorem}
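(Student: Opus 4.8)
The plan is to prove congruence by the standard coinductive technique: exhibit a relation $\mathcal{R}$ that contains $\approx$, is closed under all PAPC operators, and is itself a bisimulation; since $\approx$ is the largest bisimulation, $\mathcal{R}\subseteq\approx$ follows, so $\approx$ is preserved by every constructor and hence by every context $C[\;]$ (by induction on the structure of $C[\;]$). Concretely I would take $\mathcal{R}$ to be the closure of $\approx$ under the syntactic constructors of $\cC$ --- the prefixes $\alpha.[\;]$ and $\alpha:[\;]$, their frozen variants $\freeze{\alpha}{l}.[\;]$ and $\freeze{\alpha}{l}:[\;]$, summation $+$, parallel composition $\mid$, and constant unfolding $A\defP P$ --- together with closure under injective identifier renamings. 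Showing that $\mathcal{R}$ is a bisimulation proceeds by induction on the derivation of a transition $C[\vec P]\stackrel{\ell}{\rightarrow_r} D$, with one case per rule of Figures~\ref{fig:handshaking}--\ref{fig:rec}; in each case I reconstruct a matching derivation from $C[\vec Q]$ out of the component moves supplied by bisimilarity, and check that the targets land again in $\mathcal{R}$ --- and, for the conservative-completion clause, that the continuation $P''$ carried in the label is matched by a related $Q''$.

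Before the main induction I would isolate two invariants on which essentially every binary-operator case rests. First, Lemma~A: $C\approx C'$ implies $Id(C)=Id(C')$. The idea is that every running identifier is observable through a completion move that records it: a preemptive $\freeze{\alpha}{l}.P$ admits a $\ltrans{}_{CP}$ move carrying $l$ via $(C_1)$, lifted through $+$ and $\mid$ by $(C_2)$, $(C_3)$ or synchronised by $(C_4)$ into a $\tau$-move that still records $l$; a conservative $\freeze{\alpha}{l}:P$ analogously exposes $l$ through $\ltrans{}_{CC}$ via $(C_5)$--$(C_9)$; matching such a move forces the same $l$ on the other side. Second, Lemma~B: $\approx$ is stable under the substitutions $[l'/l]$ imposed by the conflict-resolution rules, i.e.\ $C\approx C'$ implies $C[l'/l]\approx C'[l'/l]$. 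This holds because an injective relabelling applied uniformly to both sides yields isomorphically relabelled transition systems, so bisimilarity transfers, while Lemma~A guarantees that the fresh choice $l'=\min(\mathbb{N}-Id(\cdot))$ agrees on the two sides, so the very same substitution is applied to both.

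With these in hand the operator cases become systematic. For the prefixes the only moves are the starting moves $(H_1)$, $(H_2)$, the idle interruptions $(I_3)$--$(I_6)$, and the completions $(C_1)$, $(C_5)$, each matched symmetrically; the conservative case $(C_5)$ additionally needs the label continuation to be related, which it is, since it is exactly the argument under the prefix. For summation I use Lemma~A so that the side conditions $l\notin Id(Q)$ / $l\in Id(Q)$ of $(H_3)$, $(H_4)$ hold simultaneously on both sides and so that $(C_2)$, $(C_6)$ adjoin the same set $Id(Q)$, and Lemma~B to keep the renamed residual $C_1''[l'/l]$ bisimilar to its counterpart. The parallel case is the most delicate: the handshaking $(H_7)$ and the synchronised completions $(C_4)$, $(C_8)$, $(C_9)$ must be reassembled from component moves, the fresh synchronisation identifier $l''$ must be shown equal on both sides (again via Lemma~A), and for $(C_8)$ one must track that the label continuations $P'',Q''$ --- now promoted into the resulting state $P'\mid Q'\mid P''\mid Q''$ --- remain related, which holds because $\mathcal{R}$ is closed under $\mid$ and the conservative clause supplies their relatedness. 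Constant unfolding is immediate from $(R_1)$, which makes $A$ and its guarded body interderivable.

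I expect the main obstacle to be exactly this identifier bookkeeping rather than the structural recursion. The fresh-name policy $\min(\mathbb{N}-Id(\cdot))$ is not equivariant under arbitrary renaming, so the whole argument hinges on first nailing down Lemma~A (so that both sides always observe the same $Id$ and therefore pick the same fresh identifiers and compute the same $N$, $V$ and $M\setminus L$ sets in $(C_2)$--$(C_9)$) and then using Lemma~B to propagate bisimilarity through the substitutions those rules impose. Lemma~A is itself subtle because the interruption relation is defined only at the level of summations, so verifying that a deeply nested running action can actually expose its identifier through a propagated completion (and not be blocked by a parallel sibling) requires care. A secondary difficulty is the interaction of the higher-order conservative clause with $(C_8)$, where a pair of label-carried continuations crosses over from the labels into the state; it is precisely the need to keep this promotion inside the relation that forces $\mathcal{R}$ to be defined as the closure under the parallel operator in the first place.
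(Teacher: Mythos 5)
Your proposal takes a genuinely different route from the paper. The paper's entire proof is an appeal to a meta-theorem: it cites the congruence rule format for higher-order SOS of Mousavi, Gabbay and Reniers \cite{mgr05}, asserts (without exhibiting the check) that the PAPC transition rules respect the required syntactic constraints, and declares the extension to recursion standard. You instead carry out a direct coinductive proof --- closure of $\approx$ under the constructors, induction on transition derivations --- supported by your Lemma~A ($\approx$ preserves $Id$) and Lemma~B (stability of $\approx$ under the identifier substitutions $[l'/l]$). The trade-off is clear: the paper's argument is two lines, but it is fragile exactly where your lemmas do work, since it is far from obvious that rules whose targets contain meta-level substitutions such as $P'[l'/l]$, and whose labels and side conditions are computed from the source term via $Id(\cdot)$ and $\min\{\mathbb{N}-Id(\cdot)\}$, fit a tyft/panth-style format off the shelf. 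Your explicit lemmas are precisely the content the format check would have to discharge implicitly, so your route is more informative and more self-contained, at the price of the identifier bookkeeping that you correctly predict to be the real difficulty.

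Two points in that bookkeeping need repair before the plan is a proof. First, the justification you give for Lemma~B --- that a uniform injective relabelling yields isomorphically relabelled transition systems --- is false, as you yourself half-observe in your closing paragraph: the policy $l'=\min\{\mathbb{N}-Id(\cdot)\}$ is not equivariant. For instance $\freeze{\alpha}{2}.P+\beta.Q$ starts $\beta$ with identifier $1$ by the rule symmetric to $(H_3)$, while its renaming $\freeze{\alpha}{1}.P+\beta.Q$ must generate a fresh identifier by the rule symmetric to $(H_4)$, so the two LTSs are not related by the relabelling. Lemma~B therefore cannot be obtained by transport of structure; it must be proved coinductively, jointly with Lemma~A, using $Id$-agreement at every reachable pair to guarantee that both sides of a bisimilar pair compute the same fresh identifier at each step. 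Second, and more seriously, Lemma~A is false for the rules exactly as printed: the interruption relation has no axioms for $\nil$ nor for parallel compositions, so a configuration such as $\freeze{\alpha}{1}.P \mid \nil$ is completely stuck --- rule $(C_3)$ needs a premise $\nil \interr{M} Q'$, which is underivable --- and hence $\freeze{\alpha}{1}.P \mid \nil \approx \freeze{\alpha}{2}.P \mid \nil$ although their $Id$ sets differ. Worse, placing these two configurations in the context $[\;]+\beta.Q$ yields a counterexample to the congruence statement itself on the literal rules, since $(H_3)$ applies on one side and $(H_4)$ on the other, producing different start labels. Your observability argument for Lemma~A therefore goes through only after the semantics is repaired (e.g.\ adding $\nil \interr{\emptyset} \nil$ and a means for parallel siblings to perform an empty interruption), or after weakening Lemma~A to identifiers observable through completions. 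Neither fix is difficult, but your proof --- unlike the paper's format appeal, which silently glosses over this hole --- cannot avoid committing to one of them.
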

\begin{proof}
If we consider PAPC without recursion, then the proof comes for free. In fact, in \cite{mgr05} it is shown that higher order bisimulation is a congruence with respect to all process algebra operations whose semantics
is defined through transition rules respecting some syntactical constraints, and it can be checked that the transition rules we use respect such constraints. The extension of the proof to the case of recursion is standard.
\end{proof}

The bisimulation relation for PAPC is a very fine behavioral equivalence. 
This can be seen as a disadvantage, since with behavioral equivalences it is often desirable to be able to equate as many processes as possible.
On the other hand, the fact that bisimulation turns out to be fine may have another meaning, namely that all the ingredients used in the process algebra play an important role.
This does not happen, for instance, for the parallel composition in some variants of CCS where it can be reduced into an equivalent summation of processes.

We go through this last consideration via an example. Let us consider the following two PAPC processes
\begin{align*}
C &\defP \alpha. (C \mid C) & C' \defP \alpha:C'\, .
\end{align*}
The behaviour of the two processes is similar: both of them perform an action $\alpha$ and then continue as with copies of the initial process (in the case of $C'$ one of such two copies is not indicated in the continuation of $\alpha$ since it is obtained from the semantics of conservative actions).
Even if the behaviour of the two processes seems to be the same, it is immediate to see that the two processes are not bisimilar. In fact, they repeatedly perform the same handshaking transition $\ltrans{l,\alpha^+}_H$, but followed by two different completion transitions, namely $\ltrans{l,\alpha^-,\emptyset}_{CP}$ and $\ltrans{l,\alpha^-,\emptyset,C'}_{CC}$.
This permits to state that, in general, $C \not \approx C'$. This is exactly what we expect from our bisimulation relation since, when the two processes are
put in a summation context, their behavior would determine the behavior of the whole context. In fact,
the completion of the action in a process $C+\Sigma$ would interrupt any action currently running in $\Sigma$ and, differently, for the case of $C'+\Sigma$ no actions in $\Sigma$ would be interrupted. This is
due to the fact that the two processes perform the same action but with different prefix operators.

In order to see whether this is important, is enough to consider the toy example given in the previous section where actions are modeled by a process $C \defP \alpha.(C \mid C) + \gamma : P$. In that example the creation of two new cells should be an event which interrupts, when completes, the production of protein $P$ by the cell. Of course, if the process used in the model would have been $C \defP \alpha: C + \gamma : P$, then the completion of the duplication process for the cell would not have interrupted the production of $P$.

\section{Conclusions} \label{sec:conclusions}

In this paper we considered the problem of modeling biological systems in which
different scale levels are taken into account resulting in the fact that  actions at a 
higher level may take more time than actions at a lower one. 

In order to model such systems we defined PAPC, a variant of the CCS process
algebra in which a process may be simultaneously involved in one long lasting
high level action and in several faster lower level actions. In order to model this,
we added in the algebra two different prefix operators to model the role of a process
in an action. A process can either act as conservative or preemptive in an action,
resulting in two different behaviors for the process and the other actions which
currently are started and not completed. 

We gave a compositional Structural Operations Semantics for PAPC by means of different
relations, one for each of the possible events which may change the state of a 
process, namely the start and the completion of the actions. The semantics
we gave is in ST style as this permits to observe the
start and the completion of an action as two detached events. This style of the
semantics permits also to observe processes in configurations in which multiple actions
are started and not completed.

We also defined a notion of behavioral equivalence for PAPC processes
based on the ideas of higher-order bisimulations for process calculi. We proved
that our bisimulation is a congruence for all PAPC operators.

In the paper, we also showed some simple example of PAPC processes such that
their semantics permits to observe the key features of the algebra. We also discussed
the notion of bisimulation we introduced by analyzing two simple PAPC processes.

As a future work we will apply PAPC to the modeling of multiscale systems in order
to prove the utility of the formalism. Also, we may consider to enrich PAPC with
biologically inspired operators to easily model complexation, de-complexation or
more complex biological structures as membranes or compartments as it has
been previously done with other calculi. Moreover, we may consider
the definition of more biologically inspired notions of equivalence for PAPC processes.

\bibliographystyle{eptcs} 

\end{document}